\documentclass[10pt]{article}
\usepackage{fancyhdr}
\usepackage{extramarks}
\usepackage{amsmath}
\usepackage{amsthm}

\usepackage[left]{lineno}

\usepackage[utf8]{inputenc}   
\usepackage[T1]{fontenc}
\usepackage{newunicodechar}   
 \usepackage{dblfloatfix} 
\usepackage{float}

\usepackage{booktabs}
\usepackage{tabularx}

\renewcommand{\arraystretch}{1.2}  

\newunicodechar{α}{$\alpha$}
\newunicodechar{β}{$\beta$}
\newunicodechar{γ}{$\gamma$}
\newunicodechar{κ}{$\kappa$}
\newunicodechar{δ}{$\delta$}

\newunicodechar{–}{--}       
\newunicodechar{—}{---}      
\newunicodechar{’}{'}        
\newunicodechar{±}{$\pm$}

\newunicodechar{­}{}
\usepackage{amsfonts}
\usepackage{xcolor}

\usepackage{siunitx}
\usepackage{tikz}
\usepackage[plain]{algorithm}
\usepackage{algpseudocode}
\usepackage{multirow}
\usepackage{booktabs}
\usepackage{graphicx}
\usepackage{subfigure}
\usepackage[margin=1in]{geometry}
\usepackage{booktabs}
\usepackage{makecell}
\usepackage{array}
\usepackage[colorlinks,linkcolor=black,anchorcolor=black,citecolor=black,urlcolor=blue]{hyperref}
\usepackage{hyphenat}
\usepackage{amsmath,bm}
\usepackage{booktabs}
\usepackage{mathtools}
\usepackage{amssymb}
\usepackage{tikz-cd}
\usepackage{caption}
\usepackage{capt-of}
\usepackage{mciteplus}
\usepackage{cite}
\usepackage{mathrsfs}
\usepackage[title,titletoc,toc]{appendix}
\usepackage{xr}
\usepackage{parskip}
\usepackage{soul}
\usepackage{textcomp}
\usepackage[colaction]{multicol}
\usepackage{lipsum}
\usepackage{etoolbox}
\usepackage{longtable}
\usepackage{array}
\usepackage{tablefootnote}
\usepackage{ragged2e}
\usepackage{soul}
\usepackage{placeins}
\newcolumntype{C}[1]{>{\centering\arraybackslash}p{#1}}
\usetikzlibrary{automata,positioning,arrows.meta,fit,shapes.geometric}

\usepackage{float}

\usetikzlibrary{automata,positioning,arrows.meta,fit,shapes.geometric}

\newtheorem{remark}{Remark}[section]

\newtheorem{prop}{Proposition}[section]

\usepackage{amsthm}  %
\theoremstyle{definition}
\newtheorem{definition}{Definition}[section]

\usepackage{xr}

\begin{document}
	\title{GrassTop: Grassmannian \texorpdfstring{$k$}{k}-mer Topology for Viral Classification and Phylogenetic Analysis}

	\author{Xiang Xiang Wang$^{1}$, Guo-Wei Wei$^{1,2,3}$\thanks{Corresponding author: guowei.wei@uga.edu}
		\\
		$^1$ Department of Mathematics,\\
		University of Georgia, Athens, GA 30602, USA.\\
		$^2$ Department of Biochemistry and Molecular Biology,\\
		University of Georgia, Athens, GA 30602, USA.\\
		$^3$ School of Computing,\\
		University of Georgia, Athens, GA 30602, USA.
	}
	\date{} 

	\maketitle

\begin{abstract}
We introduce GrassTop, a genome representation that integrates Grassmann manifolds and algebraic topology for viral classification and phylogenetic analysis. The framework begins by constructing multiscale topological and spectral descriptors of (k)-mer positional patterns. It then extracts a low-rank subspace that summarizes variation across the filtration and compares genomes using a Grassmannian distance. Although the reported implementation uses the chordal distance, the framework is not restricted to this particular choice.
We evaluate GrassTop on four families of viral classification datasets, four phylogenetic clustering datasets, and a sequence perturbation experiment. Under the reported 5-nearest-neighbor protocol, GrassTop achieves higher scores than five published alignment-free reference methods across all reported classification metrics and datasets. Its UPGMA (unweighted pair-group method using arithmetic averages) trees achieve an average label purity of 1.0 on every phylogenetic dataset. The perturbation experiment provides a more nuanced result: the subspace representation differs most clearly from direct comparison of the unprojected feature matrices for SARS-CoV-2, whereas the differences are smaller or non-monotonic for the other datasets. Overall, these results support GrassTop as an effective topological-geometric representation for viral classification and phylogenetic analysis.
\end{abstract}

\noindent\textbf{Keywords:} Grassmannian, $k$-mer topology, persistent homology, alignment-free genome analysis, phylogenetics.

\section{Introduction}

Viral classification and phylogenetic analysis are essential for understanding the diversity, evolution, and transmission of viruses. Viral classification provides a systematic framework for organizing viruses based on their genomic, structural, and evolutionary characteristics, enabling the identification of emerging or previously unrecognized viral lineages \cite{murphy2012virus}. Viral phylogenetics reconstructs evolutionary relationships among viral sequences, helping reveal patterns of mutation, recombination, host adaptation, transmission, and geographic spread \cite{gay2024phylogenetic}.  Together, they provide a critical foundation for surveillance, outbreak investigation, prediction of viral evolution, vaccine and antiviral development, and assessment of emerging infectious-disease risks.

Due to the extended lengths of genomes, large language models (LLMs) cannot be directly applied. Effective alignment free methods are commonly used in the field, including Nature Vector Method (NVM) \cite{sun2021geometric}, Feature Frequency Profile with
Jensen-Shannon (FFP-JS) divergence \cite{sims2009alignment,jun2010whole} and with Kullback-Leibler (FFP-KL)  \cite{vinga2003alignment}, Markov $k$-string model (MKS) \cite{qi2004whole}, and Fourier Power Spectrum (FPS) \cite{hoang2015new}.  
CAFE is a model-based alignment-free approach that  further derives explicit evolutionary distances from k-mer frequency statistics under substitution assumptions \cite{lu2017cafe}. 
Although efficient, many existing methods do not adequately capture positional and structural information, which limits their performance in the analysis of genetic variants \cite{zielezinski2017alignment}. 

This work introduces GrassTop as a new geometry and topology-based approach for viral classification and phylogenetic analysis.  GrassTop is a genome representation in which each sequence is mapped to a low-rank subspace on a Grassmannian. The representation is constructed from multiscale algebraic topology  and topological spectral theory of $k$-mer positions along the sequence. It is therefore intended to retain the organization of the descriptor collection across a filtration while replacing alignment comparison by subspace analysis. Once represented, two genomes can be compared with a distance defined on the Grassmannian.

The construction connects two existing ideas. The first is $k$-mer topology, which uses persistent homology \cite{zomorodian2005computing} and persistent Laplacian \cite{wang2020persistent,liu2021persistent} to describe the positional relationship associated genetic sequences \cite{hozumi2024kmertopology}. This topological sequence analysis (TSA) of biological sequences \cite{liu2025topological} is a generalization of the earlier topological data analysis (TDA)   of   biomolecular structures  \cite{xia2014persistent,nguyen2020review, papamarkou2024position}. The $k$-mer topology provides multiple curves over a common filtration and across a selected set of $k$-mer orders, and this construction belongs to a broader line of work applying  algebraic topology, topological spectral theory, and commutative algebra to interpretable analysis of real-world data \cite{ren2025interpretability,wee2025review, cakl2025, su2025topological}. 
Compared with frequency-based $k$-mer methods, $k$-mer topology incorporates crucial positional information.   
The second is subspace-based genome representation via Grassmannians \cite{jost2016geometry}.  Li et al. mapped frequency chaos-game representations to Grassmannians and compared their column spaces geometrically \cite{li2024grassmann}, part of a broader trend toward differential-geometry-based representation learning for biological network and single-cell data \cite{feng2024multiscale}, as well as geometric and geometric-deep-learning models of molecular structure \cite{mu2017geometric,
shen2023moleculargeometric}. GrassTop differs in the
information supplied to the subspace construction: it uses topological and spectral summaries of $k$-mer positional patterns rather than frequency chaos-game representations. Accordingly, the contribution of this paper is not the first use of a Grassmannian for
genomes, but a Grassmannian formulation of multiscale $k$-mer topology descriptors that draws on both of these methodological threads.

GrassTop normalizes and assembles $k$-mer topology descriptor blocks, extracts a rank-$r$
subspace, and represents the genome by that subspace. The experiments instantiate this
general construction with the configuration described in Section~\ref{subsec:datasets_setup}.
We study three questions: whether the
representation supports viral-family classification, whether its distances recover
labelled groups in phylogenetic clustering, and how its response to sequence
perturbations differs from direct comparison of the corresponding descriptors.

The evaluation uses four benchmark NCBI viral-family datasets, four phylogenetic datasets, and
sequence perturbation rates from $10^{-4}$ to $0.4$ \cite{hozumi2024kmertopology}. Classification is assessed with a repeated 5-nearest-neighbor protocol and six metrics; clustering is assessed with UPGMA tree purity \cite{cakl2025}; and the response to perturbation is assessed by label preservation, self preservation, and the normalized distance to the original sequence. These experiments evaluate the complete representation on the stated tasks. They do not by themselves establish that a
Grassmannian representation is preferable for every genome dataset or every choice of input representation.

The remainder of this paper is organized as follows. Section~\ref{sec:foundations} reviews the necessary background on $k$-mer topology features and Grassmannians. Section~\ref{sec:framework} introduces the GrassTop framework that maps a sequence's stacked topological feature blocks onto a Grassmannian point and presents the chordal distance used in the reported experiments. Section~\ref{sec:experiments} describes the experimental setup and reports classification, phylogenetic-clustering, and perturbation results. Section~\ref{sec:discussion} discusses the findings, and Section~\ref{sec:conclusion} concludes the paper.

\section{Mathematical Foundations}
\label{sec:foundations}

This section introduces the mathematical objects used throughout the paper: the $k$-mer
topology feature blocks of Hozumi \& Wei \cite{hozumi2024kmertopology}, and the Grassmannian
on which the GrassTop representation of Section~\ref{sec:framework} is built.

\subsection{\texorpdfstring{$k$}{k}-mer topology features}
\label{subsec:kmer-topology}

The input features of GrassTop are derived from the $k$-mer topology construction of
Hozumi and Wei \cite{hozumi2024kmertopology}. The following formulation summarizes the
components of that construction required for the Grassmannian representation.

Let $S=s_1s_2\cdots s_N$, with $s_i\in\{A,C,G,T\}$, and let $\ell$ be one of the
$4^k$ possible $k$-mer types. If $\ell$ occurs $n_\ell$ times, Hozumi and Wei define
the $k$-mer-specific position function
\cite[Definition~4.2]{hozumi2024kmertopology}
\[
 S^\ell:[n_\ell]\longrightarrow\mathbb{N},\qquad i\longmapsto S^\ell(i),
\]
where $S^\ell(i)$ is the starting position of the $i$th occurrence of $\ell$ and
$[n_\ell]=\{1,\ldots,n_\ell\}$. The associated $k$-mer-specific position-distance
matrix is
\begin{equation}
 D^\ell=\{d_{ij}^\ell\},\qquad
 d_{ij}^\ell=\left|S^\ell(i)-S^\ell(j)\right|,
 \quad 1\leq i,j\leq n_\ell,
 \label{eq:kmer-position-distance}
\end{equation}
as in \cite[Section~4.1.1]{hozumi2024kmertopology}. Thus, the input records pairwise
separations between occurrences of the same $k$-mer rather than only its total count.

\subsubsection{Persistent Laplacian features}
The filtration radius is denoted by $\rho$ rather than by $r$, as in
\cite{hozumi2024kmertopology}, because $r$ denotes the retained Grassmannian rank below.
For $\rho\geq0$, Hozumi and Wei construct the
$\ell$-specific degree-zero graph Laplacian
\cite[Section~4.1.2]{hozumi2024kmertopology}
\begin{equation}
 (L_\rho^\ell)_{ij}=
 \begin{cases}
 -1, & i\neq j\ \text{and}\ d_{ij}^\ell\leq\rho,\\
 \#\{j\neq i:d_{ij}^\ell\leq\rho\}, & i=j,\\
 0, & \text{otherwise}.
 \end{cases}
 \label{eq:kmer-graph-laplacian}
\end{equation}
Equivalently, $L_\rho^\ell=\mathsf{D}_\rho^\ell-A_\rho^\ell$, the standard
combinatorial Laplacian of the threshold graph at radius $\rho$
\cite{chung1997spectral}. These graphs are nested as $\rho$ increases, giving the
filtration used for degree-zero persistent homology
\cite{edelsbrunner2002topological,zomorodian2005computing}. The eigenvalues are ordered as
\[
 0=\lambda_1^\ell(\rho)\leq\cdots\leq
 \lambda_{n_\ell}^\ell(\rho).
\]
The multiplicity of zero, rather than the number of nonzero eigenvalues, is the
Betti-0 number:
\begin{equation}
 \beta_0^\ell(\rho)=\dim\ker L_\rho^\ell.
 \label{eq:kmer-beta-zero}
\end{equation}
This equality follows from the standard graph-Laplacian kernel theorem
\cite{chung1997spectral,wang2020persistent}. Let
$m_\ell(\rho)=n_\ell-\beta_0^\ell(\rho)$ and write the positive eigenvalues as
$0<\lambda_1^{\ell,+}(\rho)\leq\cdots\leq
\lambda_{m_\ell(\rho)}^{\ell,+}(\rho)$. Two descriptors considered by Hozumi and
Wei are
\[
 B^\ell(\rho)=\beta_0^\ell(\rho),\qquad
 E^\ell(\rho)=\lambda_1^{\ell,+}(\rho),
\]
collected over increasing radii to form $k$-mer-specific curves
\cite[Section~4.1.2]{hozumi2024kmertopology}.

The present study additionally considers the mean and standard deviation of the positive
spectrum. These quantities are not definitions introduced by Hozumi and Wei; they are
persistent spectral summary statistics of the type described by Meng and Xia
\cite{meng2021perspect}:
\begin{align}
 \mathrm{MEANP}^\ell(\rho)
 &=\frac{1}{m_\ell(\rho)}
   \sum_{j=1}^{m_\ell(\rho)}\lambda_j^{\ell,+}(\rho),\\
 \mathrm{STDP}^\ell(\rho)
 &=\left[
   \frac{1}{m_\ell(\rho)}
   \sum_{j=1}^{m_\ell(\rho)}
   \bigl(\lambda_j^{\ell,+}(\rho)-\mathrm{MEANP}^\ell(\rho)\bigr)^2
   \right]^{1/2}.
 \label{eq:kmer-spectral-moments}
\end{align}
The three positive-spectrum descriptors are set to zero when no positive eigenvalue
exists; all four descriptors are set to zero when $\ell$ is absent.

\begin{definition}[$k$-mer topology feature block]
\label{def:kmer-block}
For each descriptor $g$ selected for analysis and sampled at $T$ increasing radii
$\rho_1^{(k)},\ldots,\rho_T^{(k)}$, the $4^k$ $k$-mer-specific curves are arranged into
\[
G_{S,g}^{(k)} \in \mathbb{R}^{4^k \times T},
\qquad
\bigl[G_{S,g}^{(k)}\bigr]_{\ell,t} = g^\ell\bigl(\rho_t^{(k)}\bigr),
\]
whose rows follow the lexicographic order of the $k$-mer types and whose columns follow
the filtration order.
\end{definition}

Definition~\ref{def:kmer-block} introduces only the matrix arrangement used by
GrassTop; the position distances, Laplacians, and source curves are those defined above.
The construction is not restricted to a prescribed set of spectral descriptors. In the
experiments reported here, the selected descriptors are
$g\in\{B,E,\mathrm{MEANP},\mathrm{STDP}\}$.

\subsection{Grassmannians and subspace distances}
\label{sec:methods-grassmann}

For integers $1\leq r<n$, the collection of $r$-dimensional linear subspaces of
$\mathbb{R}^n$ forms a Grassmannian. Its representation by orthonormal frames,
the identification of subspaces through the singular value decomposition, and the
comparison of subspaces through principal angles are summarized below
\cite{edelman1998geometry,ye2016schubert}.

\begin{definition}[Grassmannian \cite{edelman1998geometry,ye2016schubert}]
\label{def:grassmann}
The Grassmannian $\mathrm{Gr}(r,n)$ is the set of all $r$-dimensional linear
subspaces of $\mathbb{R}^n$. If
\[
 \mathrm{St}(r,n)=\{U\in\mathbb{R}^{n\times r}:U^\top U=I_r\}
\]
is the Stiefel manifold of orthonormal $r$-frames, then
\[
 \mathrm{Gr}(r,n)\cong\mathrm{St}(r,n)/\mathrm{O}(r).
\]
Consequently, $U,U'\in\mathrm{St}(r,n)$ represent the same Grassmannian point if and
only if $U'=UO$ for some $O\in\mathrm{O}(r)$.
\end{definition}

\subsubsection{Singular value decomposition}
Let $Z\in\mathbb{R}^{n\times T}$ and $1\leq r<\min(n,T)$. A singular value
decomposition of $Z$ is \cite{golub2013matrix}
\[
 Z=U\Sigma V^\top,\qquad
 \sigma_1(Z)\geq\cdots\geq\sigma_{\min(n,T)}(Z)\geq0,
\]
where $U$ and $V$ are orthogonal matrices and the singular values are ordered
nonincreasingly. The first $r$ left singular vectors form
$U_r\in\mathrm{St}(r,n)$, and their span defines the left singular subspace associated
with the $r$ largest singular values:
\begin{equation}
 \mathcal{U}_r(Z)=\operatorname{span}(U_r)\in\mathrm{Gr}(r,n).
 \label{eq:grassmann-map}
\end{equation}
The truncated factorization $U_r\Sigma_rV_r^\top$ is a best rank-$r$ approximation of
$Z$ in the Frobenius norm by the Eckart--Young theorem
\cite{eckart1936approximation,golub2013matrix}. The subspace $\mathcal{U}_r(Z)$ is
uniquely determined when $\sigma_r(Z)>\sigma_{r+1}(Z)$; if the two singular values are
equal, the choice of an $r$-dimensional subspace within the corresponding singular
subspace need not be unique \cite{golub2013matrix}.

\subsubsection{Principal angles and chordal distance}
Distances between Grassmannian points can be expressed through principal angles. For
$\mathcal{U}=\operatorname{span}(U)$ and $\mathcal{V}=\operatorname{span}(V)$ in
$\mathrm{Gr}(r,n)$, their principal angles
$0\leq\theta_1\leq\cdots\leq\theta_r\leq\pi/2$ satisfy
\begin{equation}
 \cos\theta_i=\sigma_i(U^\top V),\qquad i=1,\ldots,r,
 \label{eq:principal-angles}
\end{equation}
where the singular values are ordered nonincreasingly
\cite{bjorck1973angles,ye2016schubert}. Principal angles depend only on the two
subspaces, not on the particular orthonormal bases used to represent them.

The chordal distance between two points $\mathcal{U},\mathcal{V}\in\mathrm{Gr}(r,n)$
is defined in terms of their principal angles \cite[Table~2]{ye2016schubert}. For
orthonormal basis matrices $U$ and $V$, its equivalent principal-angle and
projection-matrix forms are
\begin{equation}
  d_{\mathrm{chord}}(\mathcal{U},\mathcal{V})
  =\left(\sum_{i=1}^{r}\sin^2\theta_i\right)^{1/2}
  =\frac{1}{\sqrt{2}}
   \left\lVert UU^\top-VV^\top\right\rVert_F,
  \label{eq:chordal}
\end{equation}
where $\lVert A\rVert_F=(\operatorname{tr}(A^\top A))^{1/2}$ is the Frobenius norm.
Indeed,
\begin{equation}
 \left\lVert UU^\top-VV^\top\right\rVert_F^2
 =2r-2\left\lVert U^\top V\right\rVert_F^2
 =2\sum_{i=1}^{r}\sin^2\theta_i.
 \label{eq:chordal-equivalence}
\end{equation}
It follows that
$0\leq d_{\mathrm{chord}}(\mathcal{U},\mathcal{V})\leq\sqrt{r}$, with distance zero
if and only if the two subspaces coincide. Replacing the bases $U$ and $V$ by $UO_1$
and $VO_2$, respectively, where $O_1,O_2\in\mathrm{O}(r)$, leaves the corresponding
projection matrices and the distance unchanged. Thus, the chordal distance is
independent of the orthonormal bases chosen to represent the two Grassmannian points
\cite{ye2016schubert}.
For reference, Table~\ref{tab:notation} summarizes the main notation used throughout the
remainder of the paper.

\begin{table}[H]
\centering
\scriptsize
\setlength{\tabcolsep}{4pt}
\renewcommand{\arraystretch}{1.0}
\caption{Summary of main notation used throughout the paper.}
\label{tab:notation}
\begin{tabular}{cl}
\toprule
\textbf{Symbol} & \textbf{Description} \\
\midrule
$S$ & A genomic sequence \\
$\mathcal{K}\subset\mathbb{N}_{>0}$ & Finite, nonempty set of $k$-mer orders \\
$k\in\mathcal{K}$ & A $k$-mer order \\
$m_{\mathcal{K}}=\sum_{k\in\mathcal{K}}4^k$ & Number of $k$-mer words across these orders \\
$T$ & Number of filtration steps \\
$\mathcal{G}$ & Finite, nonempty set of descriptor types \\
$g\in\mathcal{G}$ & A descriptor type (Def.~\ref{def:kmer-block}) \\
$G_{S,g}^{(k)}\in\mathbb{R}^{4^k\times T}$ & $k$-mer topology feature block (Def.~\ref{def:kmer-block}) \\
$s_{g,k}$ & Dataset-level RMS normalization scale \\
$c_{g,k}=\sqrt{w_k}/\max(s_{g,k},\varepsilon)$ & Normalization weight for descriptor $g$ and order $k$ \\
$q=|\mathcal{G}|$ & Number of descriptor types \\
$n=q m_{\mathcal{K}}$ & Ambient feature dimension \\
$Z_S\in\mathbb{R}^{n\times T}$ & Normalized, stacked feature matrix \\
$r$ & Retained rank \\
$Z_S=U_S\Sigma_SV_S^\top$ & Truncated singular value decomposition of $Z_S$ \\
$\mathrm{Gr}(r,n)$ & Grassmannian of $r$-dim.\ subspaces of $\mathbb{R}^n$ (Def.~\ref{def:grassmann}) \\
$\mathrm{span}(U_S)\in\mathrm{Gr}(r,n)$ & GrassTop representation of $S$ \\
$d_{\mathrm{chord}}(S,R)$ & Grassmannian chordal distance (Eq.~\ref{eq:chordal}) \\
$\lVert Z_S-Z_R\rVert_F$ & Frobenius distance between unprojected feature matrices \\
\bottomrule
\end{tabular}
\end{table}

\FloatBarrier
\section{The GrassTop Framework}
\label{sec:framework}

GrassTop is a framework for representing genomes as subspaces derived from multiscale
topological and spectral descriptors of $k$-mer positional patterns. For each genome,
the descriptor blocks are normalized and assembled before a retained singular subspace
is extracted. This subspace defines a point on a Grassmannian, so two genomes can be
compared through a distance between their corresponding Grassmannian points.
Figure~\ref{fig:framework} presents the overall framework. Section~\ref{subsec:grasstop-construction}
defines the construction, Section~\ref{subsec:geometric-properties} establishes its main
geometric properties, and Section~\ref{subsec:computational_cost} describes its
computational cost.

\subsection{GrassTop construction}
\label{subsec:grasstop-construction}

Let $\mathcal{K}\subset\mathbb{N}_{>0}$ denote a finite, nonempty set of $k$-mer orders,
and let $\mathcal{G}$ denote a finite, nonempty set of descriptor types as defined in
Section~\ref{subsec:kmer-topology}.
Starting from the $k$-mer topology descriptors of
Section~\ref{subsec:kmer-topology}, GrassTop assembles the descriptor blocks, extracts a
rank-$r$ subspace, and compares the resulting genome representations with a distance
$d_{\mathrm{Gr}}$ on the Grassmannian. The construction has four steps: feature
extraction, normalization and stacking, Grassmannian embedding, and subspace comparison.

\begin{figure}[ht]
\centering
\includegraphics[width=\textwidth]{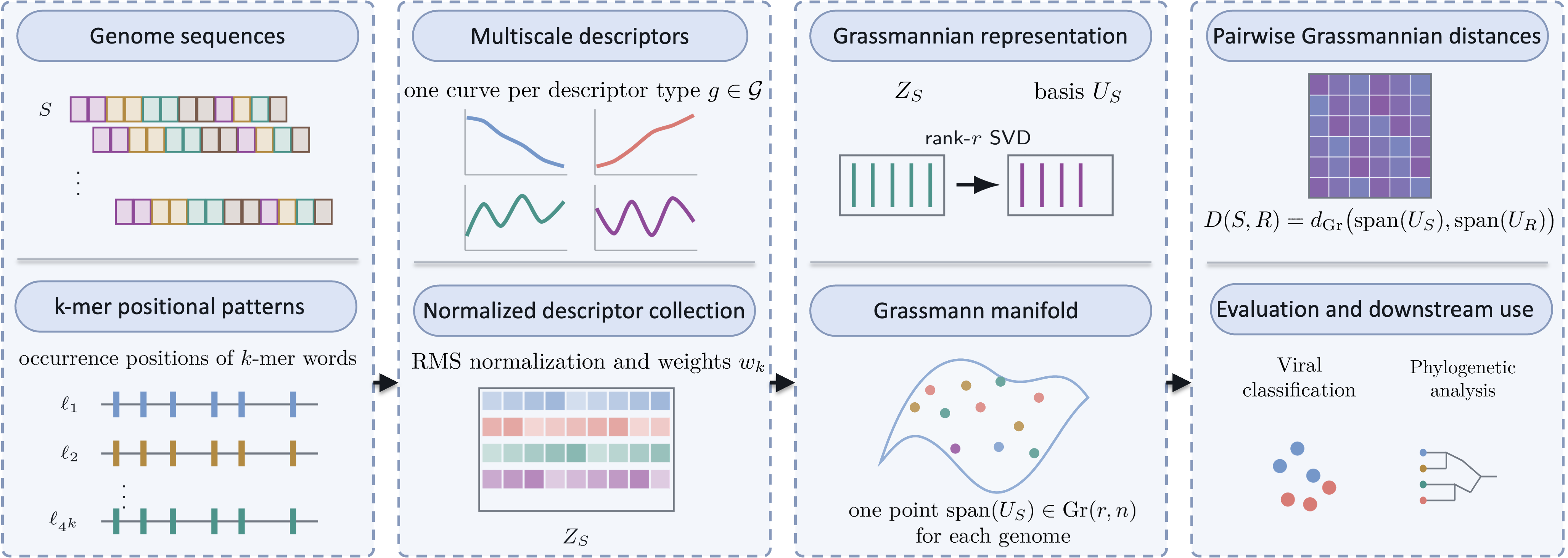}
\caption{Overall GrassTop framework. For each genome, the occurrence positions of
$k$-mer words are used to compute multiscale descriptors. The normalized descriptor
collection is represented by a rank-$r$ subspace on a Grassmann manifold. Pairwise
Grassmannian distances between these subspaces provide the input for classification and
phylogenetic analysis.}
\label{fig:framework}
\end{figure}
\FloatBarrier

\medskip
\noindent\textbf{Step 1 (Feature extraction).}
For a sequence $S$, the $k$-mer topology feature blocks $G_{S,g}^{(k)}$ are computed for
$k\in\mathcal{K}$ and $g\in\mathcal{G}$, as defined in
Section~\ref{subsec:kmer-topology}. Each block records one descriptor type over the
common filtration steps.

\medskip
\noindent\textbf{Step 2 (Normalization and stacking).}
Each block $G_{S,g}^{(k)}$ is normalized by an RMS scale $s_{g,k}$ and a
$k$-dependent weight $c_{g,k}=\sqrt{w_k}/\max(s_{g,k},\varepsilon)$, where
$w_k>0$ is the block weight for order $k$: it sets how much each $k$-mer
order contributes to the assembled matrix $Z_S$ relative to the other orders,
independently of the RMS rescaling. The values of $w_k$ used in the
experiments are given in Section~\ref{subsec:datasets_setup}. The normalized
blocks are then vertically stacked over
$(g,k)$ into one matrix
\[
Z_S \in \mathbb{R}^{q m_{\mathcal{K}} \times T},
\]
where $q=|\mathcal{G}|$ and $m_{\mathcal{K}}=\sum_{k\in\mathcal{K}}4^k$. Let
$n=q m_{\mathcal{K}}$ denote this ambient feature dimension. The scaling balances blocks from different descriptor types and $k$-mer
orders before the subspace is extracted.

\medskip
\noindent\textbf{Step 3 (Rank-$r$ Grassmannian embedding).}
For a retained rank $1\le r\le\min(n,T)$, let
$Z_S = U_S \Sigma_S V_S^\top$ denote a rank-$r$ truncated singular value decomposition.
GrassTop maps $S$ to
$\mathrm{span}(U_S) \in \mathrm{Gr}(r,n)$, the subspace spanned by the leading
left singular vectors. The singular values $\Sigma_S$ and right singular vectors $V_S$
are not part of the Grassmannian representation.

\medskip
\noindent\textbf{Step 4 (Grassmannian distance).}
Let $d_{\mathrm{Gr}}$ be a distance on $\mathrm{Gr}(r,n)$. The distance
between sequences $S$ and $R$ is defined as
\[
  D(S,R)=d_{\mathrm{Gr}}\bigl(\mathrm{span}(U_S),\mathrm{span}(U_R)\bigr).
\]
For the chordal distance of Eq.~\eqref{eq:chordal}, this becomes
$D(S,R)=d_{\mathrm{chord}}(S,R)$. Other Grassmannian distances can be used without
changing the representation.

\begin{remark}[Framework specification]
An instance of GrassTop is specified by its $k$-mer orders, descriptor types,
filtration, block weights, retained rank, normalization scales, and Grassmannian
distance. These quantities are fixed before genomes are represented or compared. Under
a fixed specification, the mapping and pairwise comparison are deterministic and
contain no learned weights. The specification used in the experiments is given in
Section~\ref{subsec:datasets_setup}.
\end{remark}

\subsection{Geometric properties of the representation}
\label{subsec:geometric-properties}

The GrassTop representation is determined by the leading left singular subspace of $Z$
rather than by a particular singular-vector basis. This construction yields the
invariance and equivariance properties stated below. Define
\[
  \Phi_r(Z)=\mathrm{span}(U_r),
\]
where $U_r$ contains the $r$ leading left singular vectors of $Z$.

\begin{prop}[Invariance and equivariance of the GrassTop embedding]
\label{prop:grasstop-invariance}
Let $Z\in\mathbb{R}^{n\times T}$ and $1\le r<\min(n,T)$ satisfy
$\sigma_r(Z)>\sigma_{r+1}(Z)$, so that its leading $r$-dimensional left singular
subspace is unique. For any nonzero scalar $a$, any $Q\in\mathrm{O}(T)$, and any
$A\in\mathrm{O}(n)$,
\begin{align}
  \Phi_r(aZQ) &= \Phi_r(Z), \label{eq:right-invariance}\\
  \Phi_r(AZ) &= A\Phi_r(Z). \label{eq:left-equivariance}
\end{align}
Consequently, independent nonzero rescaling and right-orthogonal transformation of two
descriptor matrices do not change their GrassTop chordal distance. A common left-orthogonal
transformation also leaves that distance unchanged.
\end{prop}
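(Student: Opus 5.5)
The plan is to characterize $\Phi_r(Z)$ intrinsically, without reference to a particular SVD, and then check how each transformation acts on that characterization. Under the gap assumption $\sigma_r(Z)>\sigma_{r+1}(Z)$, the leading $r$-dimensional left singular subspace equals the span of eigenvectors of the symmetric positive semidefinite matrix $ZZ^\top$ for its $r$ largest eigenvalues $\sigma_1(Z)^2\ge\cdots\ge\sigma_r(Z)^2$. Because $\sigma_r(Z)^2>\sigma_{r+1}(Z)^2$, this is a sum of full eigenspaces of $ZZ^\top$, so it is well defined. Equivalently, it is the invariant subspace of $ZZ^\top$ attached to the eigenvalues in $[\sigma_r(Z)^2,\infty)$. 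Every step below reduces to tracking how $ZZ^\top$ transforms.

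For \eqref{eq:right-invariance}, I would compute $(aZQ)(aZQ)^\top=a^2ZQQ^\top Z^\top=a^2ZZ^\top$, using $QQ^\top=I_T$. Multiplying by $a^2>0$ preserves the eigenvectors and the ordering of the eigenvalues. It scales the singular values to $|a|\sigma_i(Z)$, so the gap at position $r$ persists and the leading $r$-eigenspace is unchanged. An alternative, more explicit route works directly with the SVD. If $Z=U\Sigma V^\top$, then $aZQ=(\operatorname{sgn}(a)U)(|a|\Sigma)(Q^\top V)^\top$ is again an SVD. Its first $r$ left singular vectors are $\pm U_r$, which have the same span. Uniqueness under the gap then ensures that any other SVD of $aZQ$ yields this same span.

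For \eqref{eq:left-equivariance}, I would note that $(AZ)(AZ)^\top=AZZ^\top A^\top$ is an orthogonal similarity of $ZZ^\top$. It therefore has the same eigenvalues, with eigenvectors $Au$ for each eigenvector $u$ of $ZZ^\top$. Hence the leading $r$-eigenspace is $A\Phi_r(Z)$. In SVD form, $AZ=(AU)\Sigma V^\top$, with $AU$ orthogonal, so the leading left singular basis is $AU_r$.

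For the consequences, the distance claim under independent rescaling and right-orthogonal transformations follows at once, since both Grassmannian points are unchanged. For a common $A$, I would use the projection form \eqref{eq:chordal}. Since $AU_S$ and $AU_R$ are orthonormal bases of the transformed subspaces, $\lVert AU_SU_S^\top A^\top-AU_RU_R^\top A^\top\rVert_F=\lVert U_SU_S^\top-U_RU_R^\top\rVert_F$ by the orthogonal invariance of the Frobenius norm. Here I would also need the gap hypothesis for $Z_R$, which I would assume implicitly.

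The only real subtlety is well-definedness. The SVD basis itself is not unique, so each identity must be stated at the level of subspaces. The eigenspace characterization of $ZZ^\top$ under the spectral gap handles this, so I do not expect a genuine obstacle.
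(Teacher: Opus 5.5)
Your proposal is correct and follows essentially the same route as the paper: it identifies $\Phi_r(Z)$ with the leading $r$-eigenspace of $ZZ^\top$, then uses $(aZQ)(aZQ)^\top=a^2ZZ^\top$ and $(AZ)(AZ)^\top=AZZ^\top A^\top$, and obtains the distance statements from the projection form of the chordal distance and the orthogonal invariance of the Frobenius norm. Your additional remarks are correct refinements that the paper leaves implicit: that the spectral gap persists under each transformation, the explicit SVD alternative, and that the gap hypothesis is also needed for the second matrix $Z_R$.
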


\begin{proof}
The left singular subspace of $Z$ is the eigenspace of $ZZ^\top$ associated with its $r$
largest eigenvalues. Since
\[
  (aZQ)(aZQ)^\top=a^2ZZ^\top,
\]
the leading eigenspace is unchanged, which proves Eq.~\eqref{eq:right-invariance}. Also,
$(AZ)(AZ)^\top=A(ZZ^\top)A^\top$, so its leading eigenspace is the image under $A$ of the
leading eigenspace of $ZZ^\top$, proving Eq.~\eqref{eq:left-equivariance}. The distance
statements follow from Eq.~\eqref{eq:chordal} and the invariance of the Frobenius norm
under orthogonal transformations.
\end{proof}

The proposition also clarifies the role of normalization. GrassTop is invariant to
overall scaling of a descriptor matrix, but not to
arbitrary rescaling of individual rows or descriptor blocks; such transformations can change
$ZZ^\top$ and hence the represented subspace. Block normalization and the retained
rank are therefore structural parts of the representation rather than properties removed
by the Grassmannian embedding.

\subsection{Computational cost}
\label{subsec:computational_cost}

GrassTop uses the persistent-homology and persistent-Laplacian calculations of
\cite{hozumi2024kmertopology} to obtain the $k$-mer topology descriptors. Their cost
depends on the $k$-mer orders, word occurrences and positions, filtration resolution,
and the topological or spectral quantities being computed. The analysis below focuses
on the normalization, assembly, subspace computation, and distance evaluation applied
after descriptor construction.

Let $n=q m_{\mathcal{K}}$ be the number of rows in the assembled descriptor array and let
$T$ be the number of filtration values. For each genome, normalization and assembly require
$O(nT)$ operations. The implementation used in this study forms the $T\times T$ Gram
matrix $Z_S^\top Z_S$, computes its eigendecomposition, and recovers the $r$ leading left
singular vectors. These operations require, respectively, $O(nT^2)$, $O(T^3)$, and
$O(nTr)$ time. The working memory for the assembled array, Gram matrix, and retained basis
is $O(nT+T^2+nr)$. In the rank-deficient case, the implementation instead uses a dense
SVD, whose time complexity is $O(nT\min(n,T))$.

Once the subspace representations have been constructed, computing the chordal distance
between two genomes requires $O(nr^2)$ operations. For a collection of $N$ genomes, all
pairwise distances therefore require $O(N^2nr^2)$ operations and $O(N^2)$ storage if the
complete distance matrix is retained. Estimating shared normalization factors requires a
single pass over the relevant descriptor arrays. After these factors have been fixed,
feature construction and subspace computation can be performed independently for each
genome.

\FloatBarrier
\section{Experiments}
\label{sec:experiments}

We evaluated GrassTop through viral-family classification, phylogenetic clustering, and
sequence perturbation. For classification, pairwise chordal distances were used for
5-nearest-neighbor prediction of viral-family labels. For phylogenetic analysis, the same
distances were used to construct UPGMA trees, which were evaluated against known taxonomic
or lineage groups. The perturbation experiment changed bases at specified rates
and compared how well GrassTop and the underlying descriptors preserved source identity
and group labels. The classification and clustering results were compared with published
alignment-free methods.

\subsection{Datasets and evaluation}
\label{subsec:datasets_setup}

All experiments use the same $k$-mer orders, filtration resolution, block weights,
retained rank, and Grassmannian distance. Specifically, $\mathcal{K}=\{1,\ldots,5\}$,
$T=50$ filtration steps,
$w_k=2^{-(5-k)}$ \cite{hozumi2024kmertopology}, rank $r=10$, and the chordal distance.
For this configuration, $m_{\mathcal{K}}=\sum_{k=1}^{5}4^k=1364$. Classification uses the four
descriptor types $B$, $E$, $\mathrm{MEANP}$, and $\mathrm{STDP}$; phylogenetic clustering uses
$B$ alone; and the perturbation experiment uses $B$, $\mathrm{MEANP}$, and $\mathrm{STDP}$.
For each task, the RMS scales $s_{g,k}$ are estimated once from the complete dataset
without class labels and are then held fixed throughout the analysis, including all
cross-validation folds in the classification task.

\subsubsection{Classification datasets and evaluation}
\label{sec:eval-classification}

Four NCBI viral-genome datasets were used: NCBI 2020, NCBI 2022, NCBI 2024,
and NCBI 2024 (All) \cite{hozumi2024kmertopology}. Families containing fewer than
15 sequences were removed before the data were split. Table~\ref{tab:datasets} gives
the numbers of sequences and families remaining in each dataset.

\begin{table}[ht]
\centering
\caption{Dataset sizes after removing families with fewer than 15 sequences.}
\label{tab:datasets}
\begin{tabular}{lrr}
\toprule
Dataset & Sequences & Families \\
\midrule
NCBI 2020        & 6{,}810  & 57  \\
NCBI 2022        & 11{,}066 & 78  \\
NCBI 2024        & 11{,}635 & 109 \\
NCBI 2024 (All)  & 13{,}160 & 120 \\
\bottomrule
\end{tabular}
\end{table}

Each filtered dataset was evaluated by stratified 5-fold cross-validation with 30 random
seeds. For each test sequence, the five training sequences with the smallest GrassTop
distances were identified, and their majority family was assigned to the test sequence.
Accuracy (ACC), balanced accuracy (BA), F1, one-vs-rest AUC-ROC, recall, and precision
were calculated for each fold. Results were averaged over the 150 cross-validation folds.
The reference methods were the natural vector method (NVM) \cite{sun2021geometric},
feature-frequency profiles compared by Jensen--Shannon divergence (FFP-JS)
\cite{sims2009alignment,jun2010whole}, feature-frequency profiles compared by
Kullback--Leibler divergence (FFP-KL) \cite{vinga2003alignment}, the Markov $k$-string
method \cite{qi2004whole}, and the Fourier power spectrum method (FPS)
\cite{hoang2015new}. Their values in Section~\ref{sec:classification} were taken from
\cite{hozumi2024kmertopology}, which used the same evaluation procedure.

\subsubsection{Phylogenetic datasets and tree evaluation}
\label{sec:eval-phylogenetics}

Four datasets were used for phylogenetic clustering: human rhinovirus
(HRV, $n{=}116$, four labels), mammalian
mitochondrial genomes ($n{=}41$, eight labels), Ebolavirus ($n{=}59$, five labels), and
SARS-CoV-2 ($n{=}44$, eight lineage groups) \cite{cakl2025}. The HRV, mitochondrial, and
Ebolavirus datasets contain every accession in the published lists. The 44 SARS-CoV-2
sequences and their lineage labels were retrieved
from GISAID using the published EPI\_ISL accessions.

For each dataset, GrassTop distances were used to construct a UPGMA tree. Labels were
not used to calculate distances or construct the tree. Tree purity was calculated from
Eqs.~22--24 of Ref. \cite{cakl2025}. Let $\mathcal{L}$ be
the set of labels, let $S_\ell$ be the set of leaves with label $\ell$, and let
$n_\ell=|S_\ell|$. The maximal pure subtrees for label $\ell$ partition $S_\ell$ into
$S_{\ell,1},\ldots,S_{\ell,m_\ell}$. The tree-level purity reported in Table~\ref{tab:purity}
is
\begin{equation}
 \operatorname{Purity}(T)
 =\frac{1}{|\mathcal{L}|}
 \sum_{\ell\in\mathcal{L}}
 \sum_{j=1}^{m_\ell}
 \left(\frac{|S_{\ell,j}|}{n_\ell}\right)^2.
 \label{eq:tree-purity}
\end{equation}
A label has purity 1 when all of its leaves belong to one pure subtree; fragmentation
across several pure subtrees gives a smaller value.

The five Ebolavirus reference entries marked $\dagger$ in Table~\ref{tab:purity} use the
clade-consistency precision reported by \cite{hozumi2024kmertopology}, rather than
Eq.~\eqref{eq:tree-purity}. For a label $\ell$, let $C_\ell$ be the leaf set of the
smallest clade containing all leaves in $S_\ell$. Its precision is
\begin{equation}
 \operatorname{Prec}(\ell)=\frac{|S_\ell|}{|C_\ell|}.
 \label{eq:clade-precision}
\end{equation}
Thus, the score is 1 for a label exactly when its smallest containing clade has no leaves
from other labels. The cited study reports a value of 1.00 for each of the five reference
methods on Ebolavirus. These values are retained as published but are not directly
comparable with the GrassTop purity because the scoring definitions differ.

\subsubsection{Sequence perturbation experiment}
\label{sec:eval-stability}

The same four phylogenetic datasets were used in the sequence perturbation experiment.
At each perturbation rate $p$, every A, C, G, or T position was independently
changed with probability $p$. The original base at each selected position was
replaced with one of the other three bases, chosen uniformly at random. The rates were
\[
p \in \{0.0001, 0.0005, 0.001, 0.005, 0.01, 0.02, 0.05, 0.1, 0.15, 0.25, 0.4\}.
\]
For example, at $p=0.001$, a sequence of length 10,000 has 10 changed positions on
average. If position 237 is selected and contains A, it may be changed to G, while every
position not selected remains unchanged.

One original sequence was sampled from each label. At each rate, the procedure above was
repeated independently five times for every sampled sequence. The selected positions and
replacement bases could therefore differ among the five perturbed sequences. No bases
were inserted or deleted, so all sequences retained their original length. For each
dataset and rate, this gave 20 perturbed
sequences for HRV, 25 for Ebolavirus, and 40 each for the mitochondrial and SARS-CoV-2
datasets. GrassTop and the
descriptors before subspace projection were computed from the same perturbed sequences
using $B$, $\mathrm{MEANP}$, and $\mathrm{STDP}$. For both GrassTop and these descriptors,
the five nearest neighbors of each perturbed sequence were selected from all original
sequences.

\medskip
\noindent\textbf{Label-preservation.}
A perturbed sequence is counted as label-preserved when the majority label among its five
nearest neighbors equals the label of its original sequence. The reported score is the
fraction of perturbed sequences satisfying this condition at a given perturbation rate.

\medskip
\noindent\textbf{Self-preservation.}
A perturbed sequence is counted as self-preserved when its original sequence occurs among
its five nearest neighbors. The reported score is the fraction of perturbed sequences
satisfying this condition at a given perturbation rate.

\medskip
We also measured the distance from each perturbed sequence to its original sequence. We
used the chordal distance for GrassTop (Eq.~\eqref{eq:chordal}) and the Frobenius distance for the
descriptors before projection. Each distance was divided by the mean pairwise distance
among the original sequences, calculated separately for GrassTop and
the descriptors. A normalized value of 1 therefore equals the mean distance between
original sequences for the corresponding method. The same original and perturbed
sequences were used for all three measurements.

\subsection{Results}
\label{sec:results}

The following subsections present the experimental results for viral-family
classification, phylogenetic clustering, and sequence perturbation.

\subsubsection{Classification benchmark}
\label{sec:classification}

GrassTop has the largest displayed value for each of the six classification metrics in
all four datasets (Tables~\ref{tab:ncbi2020}--\ref{tab:ncbi2024full}). The reference values
for NVM \cite{sun2021geometric}, FFP-JS \cite{sims2009alignment,jun2010whole}, FFP-KL
\cite{vinga2003alignment}, Markov $k$-string \cite{qi2004whole}, and FPS
\cite{hoang2015new} were taken from Table~1 of \cite{hozumi2024kmertopology}.

\begin{table}[ht]
\centering
\caption{5-NN classification results (macro-averaged over 5-fold $\times$ 30-seed CV),
NCBI 2020. The first five rows are reported in Table~1 of
\cite{hozumi2024kmertopology}. Bold indicates the largest displayed value in each column.}
\label{tab:ncbi2020}
\setlength{\tabcolsep}{4pt}
\begin{tabular}{lrrrrrr}
\toprule
Method & ACC & BA & F1 & AUC-ROC & Recall & Precision \\
\midrule
NVM & 0.847 & 0.807 & 0.809 & 0.960 & 0.807 & 0.840 \\
FFP-JS & 0.821 & 0.790 & 0.781 & 0.954 & 0.790 & 0.814 \\
FFP-KL & 0.819 & 0.789 & 0.780 & 0.954 & 0.789 & 0.814 \\
Markov & 0.713 & 0.644 & 0.622 & 0.905 & 0.644 & 0.668 \\
FPS & 0.714 & 0.637 & 0.633 & 0.917 & 0.637 & 0.665 \\
\midrule
GrassTop & \textbf{0.9131} & \textbf{0.8909} & \textbf{0.8925} & \textbf{0.9798} & \textbf{0.8909} & \textbf{0.9046} \\
\bottomrule
\end{tabular}
\end{table}

\begin{table}[H]
\centering
\caption{5-NN classification results, NCBI 2022. The first five rows are reported in
Table~1 of \cite{hozumi2024kmertopology}. Bold indicates the largest displayed value in
each column.}
\label{tab:ncbi2022}
\setlength{\tabcolsep}{4pt}
\begin{tabular}{lrrrrrr}
\toprule
Method & ACC & BA & F1 & AUC-ROC & Recall & Precision \\
\midrule
NVM & 0.852 & 0.747 & 0.750 & 0.935 & 0.747 & 0.791 \\
FFP-JS & 0.830 & 0.740 & 0.733 & 0.937 & 0.740 & 0.769 \\
FFP-KL & 0.832 & 0.743 & 0.735 & 0.934 & 0.743 & 0.771 \\
Markov & 0.724 & 0.593 & 0.577 & 0.876 & 0.593 & 0.617 \\
FPS & 0.723 & 0.588 & 0.580 & 0.890 & 0.588 & 0.603 \\
\midrule
GrassTop & \textbf{0.9040} & \textbf{0.8229} & \textbf{0.8288} & \textbf{0.9593} & \textbf{0.8229} & \textbf{0.8656} \\
\bottomrule
\end{tabular}
\end{table}

\begin{table}[H]
\centering
\caption{5-NN classification results, NCBI 2024. The first five rows are reported in
Table~1 of \cite{hozumi2024kmertopology}. Bold indicates the largest displayed value in
each column.}
\label{tab:ncbi2024}
\setlength{\tabcolsep}{4pt}
\begin{tabular}{lrrrrrr}
\toprule
Method & ACC & BA & F1 & AUC-ROC & Recall & Precision \\
\midrule
NVM & 0.814 & 0.729 & 0.738 & 0.933 & 0.729 & 0.789 \\
FFP-JS & 0.796 & 0.724 & 0.712 & 0.936 & 0.724 & 0.744 \\
FFP-KL & 0.796 & 0.727 & 0.714 & 0.932 & 0.727 & 0.747 \\
Markov & 0.633 & 0.589 & 0.554 & 0.876 & 0.589 & 0.573 \\
FPS & 0.660 & 0.561 & 0.560 & 0.882 & 0.561 & 0.593 \\
\midrule
GrassTop & \textbf{0.8805} & \textbf{0.8135} & \textbf{0.8228} & \textbf{0.9590} & \textbf{0.8135} & \textbf{0.8645} \\
\bottomrule
\end{tabular}
\end{table}

\begin{table}[H]
\centering
\caption{5-NN classification results, NCBI 2024 (All). The first five rows are reported
in Table~1 of \cite{hozumi2024kmertopology}. Bold indicates the largest displayed value
in each column.}
\label{tab:ncbi2024full}
\setlength{\tabcolsep}{4pt}
\begin{tabular}{lrrrrrr}
\toprule
Method & ACC & BA & F1 & AUC-ROC & Recall & Precision \\
\midrule
NVM & 0.809 & 0.729 & 0.738 & 0.931 & 0.729 & 0.788 \\
FFP-JS & 0.799 & 0.721 & 0.711 & 0.935 & 0.721 & 0.745 \\
FFP-KL & 0.799 & 0.724 & 0.712 & 0.931 & 0.724 & 0.745 \\
Markov & 0.638 & 0.589 & 0.555 & 0.876 & 0.589 & 0.575 \\
FPS & 0.651 & 0.561 & 0.559 & 0.879 & 0.561 & 0.591 \\
\midrule
GrassTop & \textbf{0.8788} & \textbf{0.8165} & \textbf{0.8232} & \textbf{0.9588} & \textbf{0.8165} & \textbf{0.8629} \\
\bottomrule
\end{tabular}
\end{table}

Within each dataset, NVM is the strongest of the five reference methods on five or six of
the six metrics; the exception is AUC-ROC on NCBI 2022, 2024, and 2024 (All), where
FFP-JS is marginally higher than NVM. Measured against the strongest reference value in
each column, GrassTop's margin depends more on the metric than on the dataset. The margin
is smallest and most stable for AUC-ROC, ranging from $0.0198$ (NCBI 2020) to $0.0238$
(NCBI 2024 All), because all five reference methods already score above $0.87$ on this
metric. The margin is largest for balanced accuracy, F1, and recall, ranging from about
$0.076$ to $0.088$ across the four datasets; balanced accuracy and recall take identical
values in every row because macro-averaged recall coincides with balanced accuracy under
this evaluation protocol. The margin for accuracy and precision falls between the other
two groups, at roughly $0.052$--$0.070$ for accuracy and $0.065$--$0.076$ for precision.

These comparisons concern the complete GrassTop representation, including the selected
descriptors, normalization, retained rank, and chordal distance. Because the published
methods use different feature constructions, the results do not isolate the effect of the
subspace representation alone.

\FloatBarrier
\subsubsection{Phylogenetic clustering purity}
\label{sec:phylogenetics}

Each of the four GrassTop trees has an average purity of 1.0 under the stated purity
measure. Table~\ref{tab:purity} places these results alongside published alignment-free
reference values. For SARS-CoV-2, mammalian mitochondrial genomes, and HRV, the reference
values were read from the bars in Fig.~2d--f of \cite{cakl2025}. The Ebolavirus values are
from Table~2 of \cite{hozumi2024kmertopology} and use a different clade-consistency
measure; they are included for context but not as a direct numerical comparison.

GrassTop is above the five displayed reference values for SARS-CoV-2 and mammalian
mitochondrial genomes and equals the FPS value of 1.0 for HRV. For Ebolavirus, GrassTop
reaches 1.0 under the purity measure used here, while each reference method reaches 1.0
under the related measure reported in \cite{hozumi2024kmertopology}. The equality of these
Ebolavirus values does not establish equivalent performance because the measures differ.

\begin{table}[ht]
\centering
\caption{UPGMA clustering scores. Unmarked values use the tree purity in
Eq.~\eqref{eq:tree-purity}; baseline values for the first three datasets are read from
Fig.~2d--f of \cite{cakl2025}. Values marked $\dagger$ are the clade-consistency precision
scores reported in Table~2 of \cite{hozumi2024kmertopology} and follow
Eq.~\eqref{eq:clade-precision}. GrassTop uses the Betti-0 descriptor type and rank $r=10$.
Bold identifies the GrassTop value; daggered and unmarked values should not be compared
as if they used the same score.}
\label{tab:purity}
\setlength{\tabcolsep}{10pt}
\begin{tabular}{llr}
\toprule
Dataset & Method & Reported score \\
\midrule
\multirow{6}{*}{SARS-CoV-2}
 & NVM & 0.372 \\
 & FFP-JS & 0.835 \\
 & FFP-KL & 0.795 \\
 & FPS & 0.267 \\
 & Markov (MKS) & 0.828 \\
 & GrassTop & \textbf{1.0000} \\
\midrule
\multirow{6}{*}{Mammalian mitochondrial}
 & NVM & 0.906 \\
 & FFP-JS & 0.876 \\
 & FFP-KL & 0.876 \\
 & FPS & 0.953 \\
 & Markov (MKS) & 0.732 \\
 & GrassTop & \textbf{1.0000} \\
\midrule
\multirow{6}{*}{Human rhinovirus (HRV)}
 & NVM & 0.878 \\
 & FFP-JS & 0.745 \\
 & FFP-KL & 0.795 \\
 & FPS & 1.000 \\
 & Markov (MKS) & 0.337 \\
 & GrassTop & \textbf{1.0000} \\
\midrule
\multirow{6}{*}{Ebolavirus}
 & NVM$^\dagger$ & 1.00 \\
 & FFP-JS$^\dagger$ & 1.00 \\
 & FFP-KL$^\dagger$ & 1.00 \\
 & FPS$^\dagger$ & 1.00 \\
 & Markov$^\dagger$ & 1.00 \\
 & GrassTop & \textbf{1.0000} \\
\bottomrule
\end{tabular}
\end{table}

\FloatBarrier

\begin{figure}[p]
\centering
\includegraphics[width=0.47\textwidth]{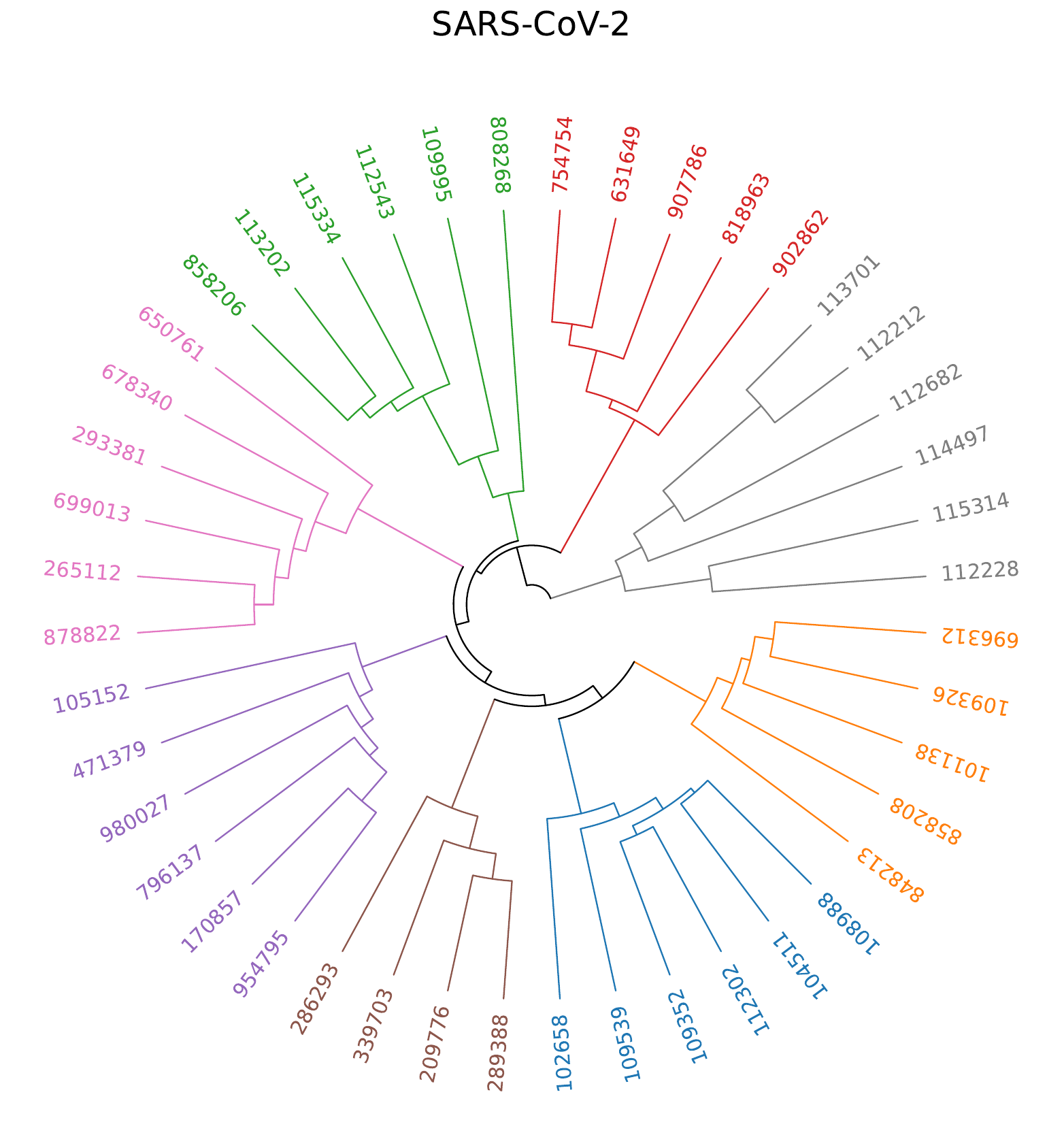}
\hfill
\includegraphics[width=0.47\textwidth]{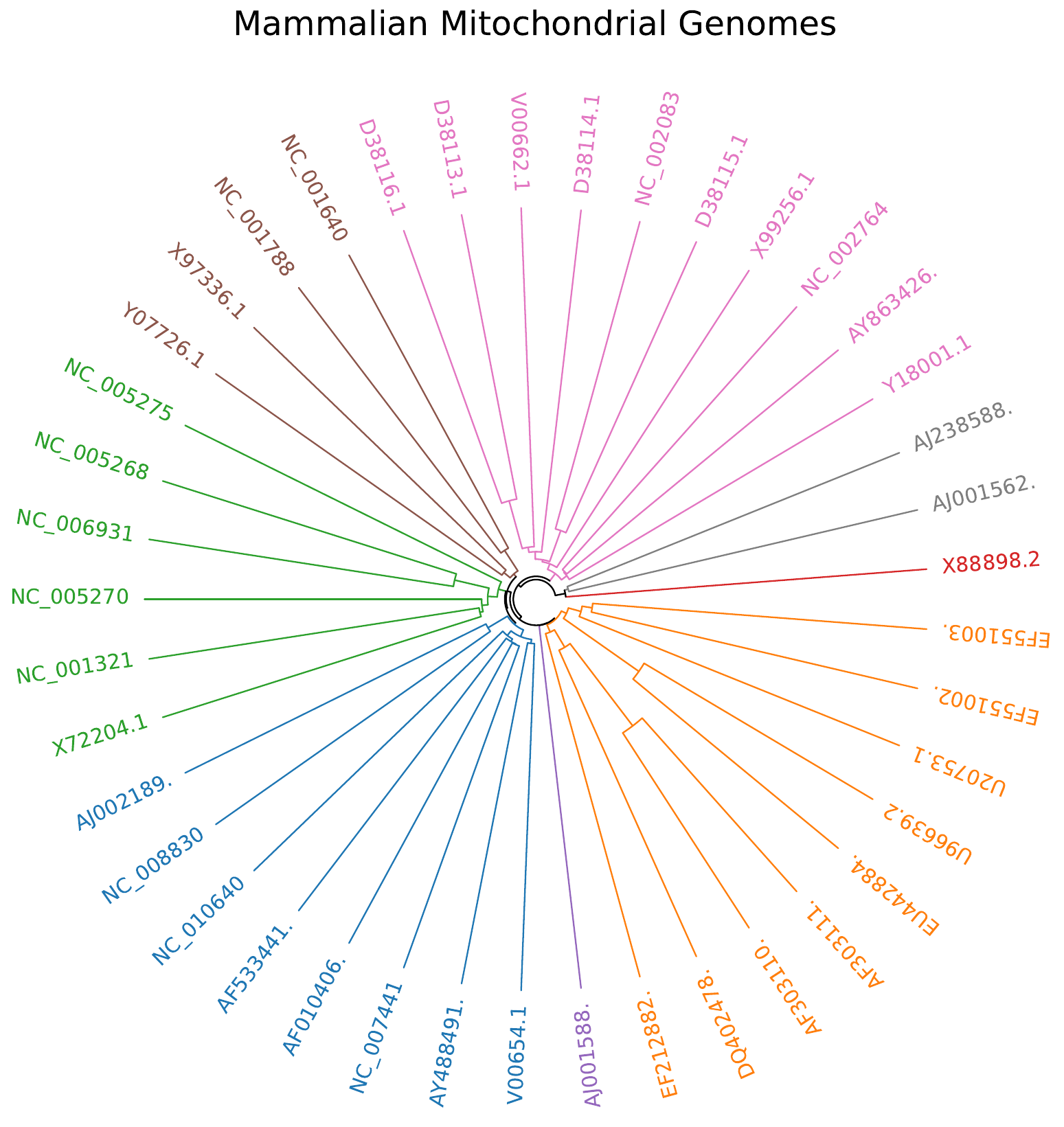}

\vspace{1em}
\includegraphics[width=0.47\textwidth]{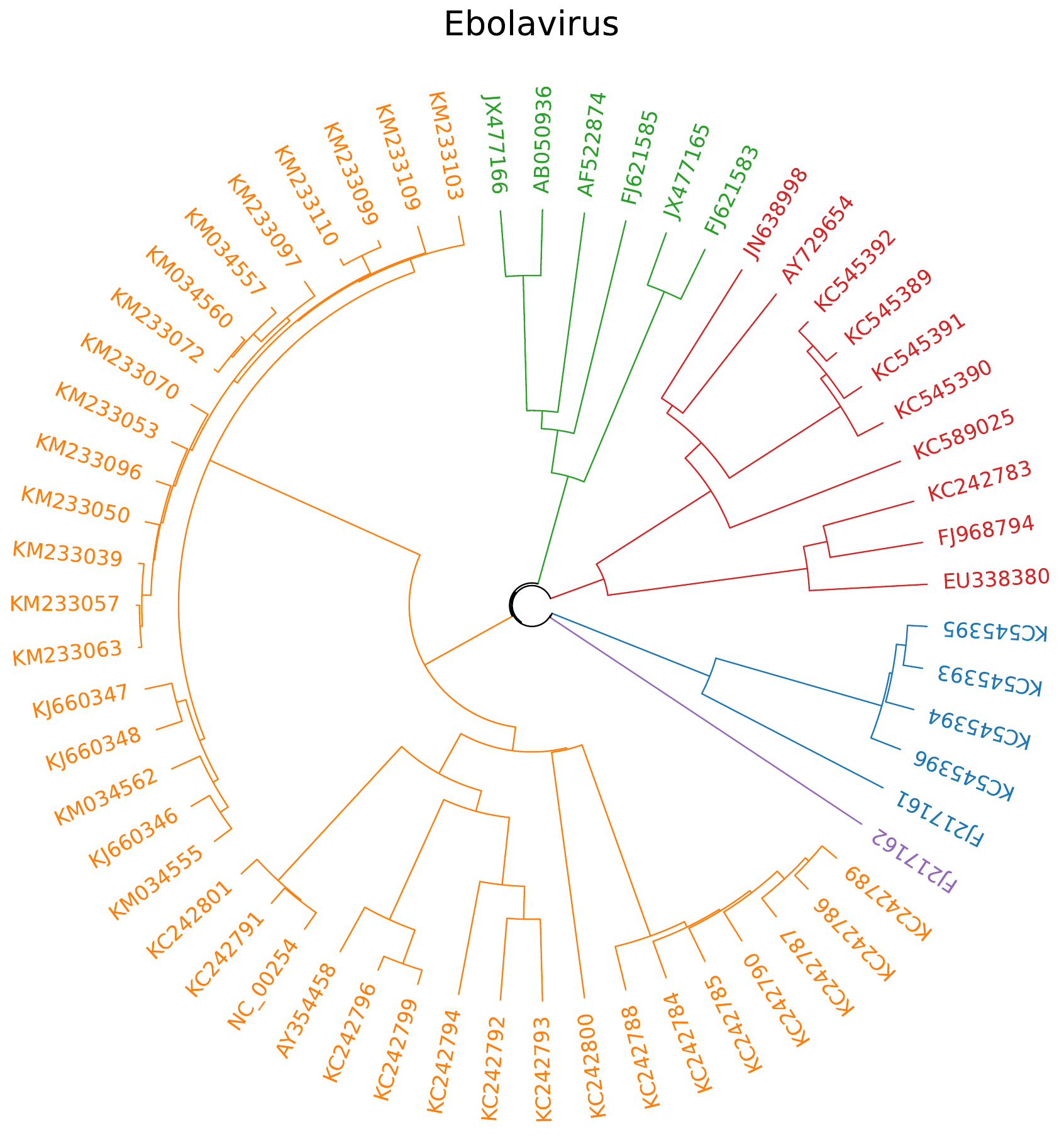}
\hfill
\includegraphics[width=0.47\textwidth]{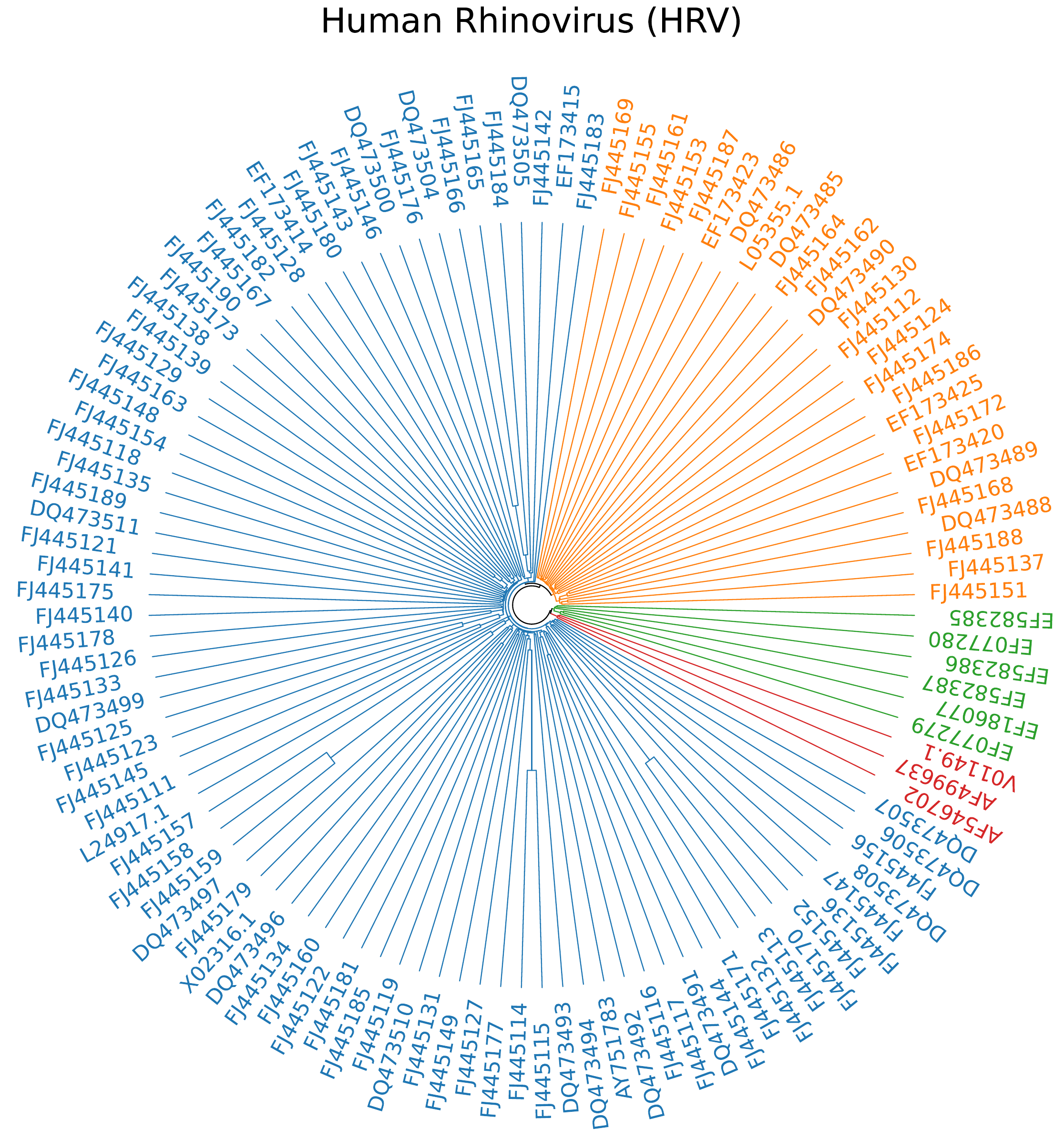}
\caption{Circular UPGMA trees constructed from GrassTop distances. Branches are colored
by the ground-truth label of the largest pure subtree to which they belong and are black
when a merge joins different labels. Top-left: SARS-CoV-2 ($n{=}44$, lineage). Top-right:
mammalian mitochondrial genomes ($n{=}41$, order). Bottom-left: Ebolavirus ($n{=}59$,
species: BDBV/EBOV/RESTV/SUDV/TAFV). Bottom-right: human rhinovirus ($n{=}116$,
HRV-A/HRV-B/HRV-C and the HEV outgroup).}
\label{fig:trees-circular}
\end{figure}

\FloatBarrier
\subsubsection{Stability under sequence perturbation}
\label{sec:stability}

Across the four selected perturbation rates in Appendix
Tables~\ref{tab:label-stability} and~\ref{tab:self-stability} (32 dataset-rate-measure
combinations in total: 4 datasets $\times$ 4 rates $\times$ 2 measures), GrassTop matches
or exceeds comparison based on the Frobenius distance between the unprojected feature
matrices in 29 of 32 combinations, with a strict advantage in
19 and a tie in 10. The three exceptions are small in absolute size and occur either at
the highest tested perturbation rate, where both representations are already heavily
degraded (SARS-CoV-2 label-preservation at rate $0.40$, a $0.02$ gap, and self-preservation
at rate $0.40$, a $0.09$ gap), or in the 20-sequence HRV sample (label-preservation at rate
$0.25$). Figure~\ref{fig:stability-curves} shows label- and self-preservation across the
full 11 perturbation rates.

\begin{figure}[ht]
\centering
\includegraphics[width=0.98\textwidth]{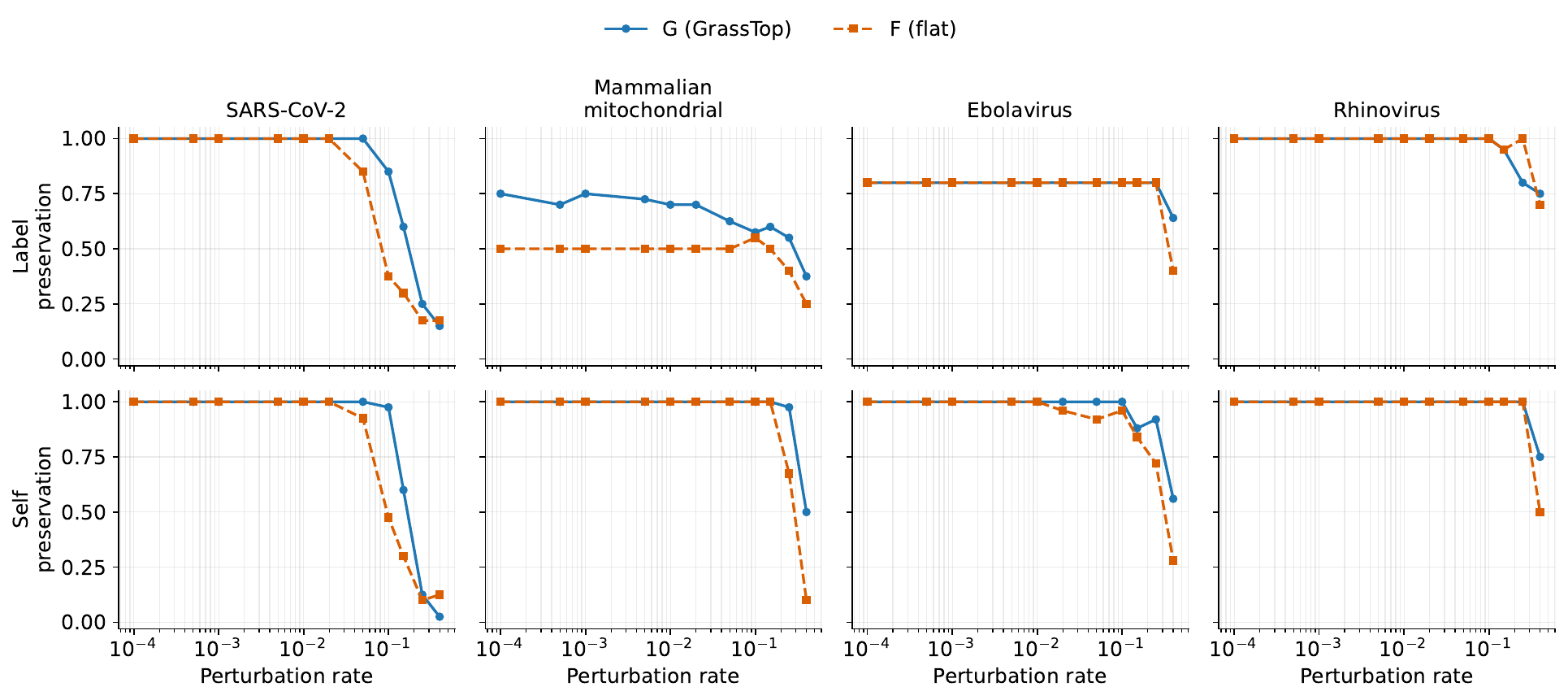}
\caption{Label- and self-preservation over the 11 perturbation rates. Label-preservation
(top) is the fraction of perturbed sequences whose 5-NN majority vote recovers the original label;
self-preservation (bottom) is the fraction for which the original sequence remains among
the five nearest neighbors. The perturbation-rate axis is logarithmic. Solid curves denote
GrassTop and dashed curves denote comparison based on the Frobenius distance between
the unprojected feature matrices.}
\label{fig:stability-curves}
\end{figure}

For SARS-CoV-2, GrassTop has higher preservation over most of the perturbation-rate grid. At
rate $0.10$, label-preservation is $0.85$ for GrassTop and $0.375$ for comparison based
on the unprojected feature matrices, while self-preservation is $0.975$ and $0.475$,
respectively. At rate $0.40$, label-preservation is low for both comparisons and is
slightly higher when the unprojected feature matrices are compared by Frobenius distance
($0.17$ versus $0.15$). For mammalian mitochondrial genomes,
GrassTop has higher label-preservation at every sampled rate; the values at rate $0.25$
are $0.55$ and $0.40$. Values below 1 at the lowest perturbation rates can occur because
preservation is evaluated by 5-NN within a pool of 41 sequences assigned to eight labels;
they do not necessarily result from the perturbation.

For Ebolavirus, both representations have label-preservation of $0.80$ from rate
$0.0001$ through $0.25$. GrassTop has higher self-preservation from rate $0.02$ onward
and is higher on both measures at rate $0.40$. For HRV, the preservation values are equal
through rate $0.15$. At rate $0.25$, label-preservation is $1.00$ for comparison based
on the Frobenius distance between the unprojected feature matrices and $0.80$ for
GrassTop; at rate $0.40$, GrassTop is higher on both
measures. Each HRV value is based on 20 perturbed sequences, so these differences do not locate a
precise crossover rate.

The normalized distances to the original sequences provide a complementary comparison
(Figure~\ref{fig:self-distance}; Appendix Table~\ref{tab:self-distance-values}). In all
four datasets, the normalized chordal distance (GrassTop) falls below the normalized
Frobenius distance between the unprojected feature matrices once the perturbation rate is
high enough, consistent with the upper bound $\sqrt{r}$ for the chordal distance and the
absence of a corresponding upper bound for the Frobenius distance. The rate at which this
crossover occurs, and the size of the
gap it opens by rate $0.40$, both depend on the dataset. For SARS-CoV-2 the crossover is
earliest, at rate $0.0050$, and the gap is largest by rate $0.40$: a ratio of about $3.1$
($13.315$ versus $4.262$). For Ebolavirus the crossover occurs at rate $0.15$, and for
mammalian mitochondrial genomes and HRV it occurs later, at rate $0.25$; the ratio at rate
$0.40$ is correspondingly smaller: about $1.5$ for Ebolavirus ($1.973$ versus $1.341$),
$1.3$ for mammalian mitochondrial genomes ($1.325$ versus $1.057$), and $1.2$ for HRV
($1.261$ versus $1.051$). Because the datasets differ in sequence length, label structure,
and within-group diversity, this experiment does not determine why the crossover rate and
the resulting gap size differ across datasets.

\begin{figure}[ht]
\centering
\includegraphics[width=0.98\textwidth]{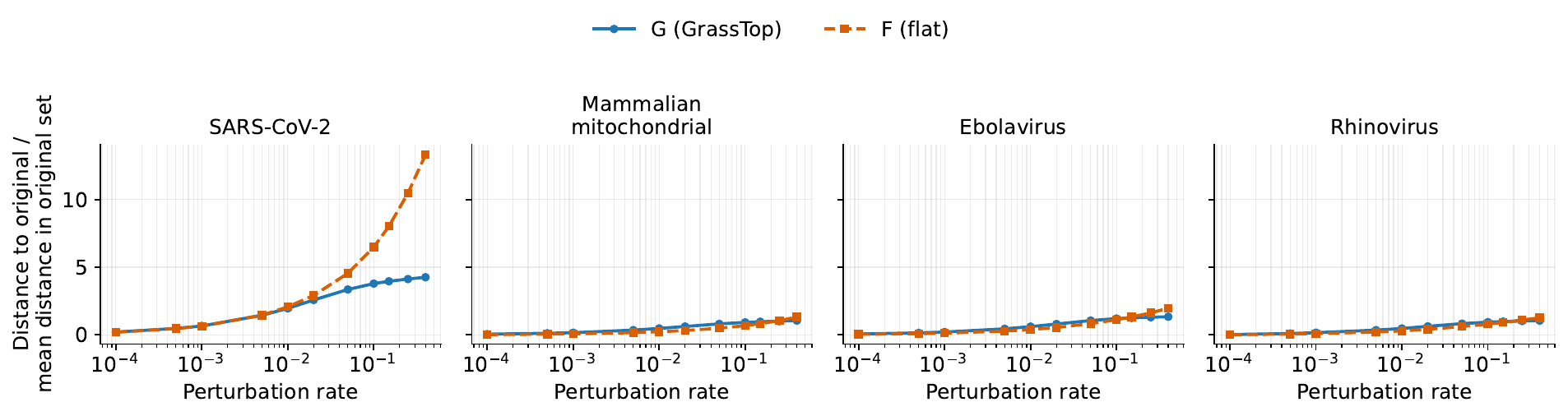}
\caption{Mean distance from each perturbed sequence to its original sequence, normalized by the mean
pairwise distance among the original sequences. The perturbation-rate axis is
logarithmic. Solid curves denote GrassTop and dashed curves denote comparison based on
the Frobenius distance between the unprojected feature matrices;
shaded regions show the interquartile range at each rate.}
\label{fig:self-distance}
\end{figure}
\FloatBarrier

\section{Discussion}
\label{sec:discussion}

Across the three empirical evaluations in Section~\ref{sec:experiments}, GrassTop matches
or exceeds the compared alignment-free methods on nearly every reported measure: it is
above all five reference methods on each of the six classification metrics on all four
NCBI corpora (Section~\ref{sec:classification}); it attains average label purity 1.0 on
all four phylogenetic datasets (Section~\ref{sec:phylogenetics}); and under sequence
perturbation, it matches or exceeds comparison based on the Frobenius distance between
the unprojected feature matrices in 29 of the 32 selected
dataset-rate-measure combinations reported in Section~\ref{sec:stability}. These results
evaluate the complete GrassTop representation as a unit, with the classification and
phylogenetic baselines coming from separately published implementations rather than a
shared codebase. A natural next step is an ablation that holds the descriptors,
normalization, and rank fixed while varying only the subspace/distance step, which would
isolate how much of the observed advantage is attributable to the Grassmannian embedding
itself.

Two directions can further strengthen this evidence. First, the tree purity of
Eq.~\eqref{eq:tree-purity} is a targeted measure of label recovery; extending the
evaluation with branch-length- or divergence-time-aware criteria against established
evolutionary models is a natural next step, complementing the label-recovery evidence
reported here. Second, the distance behavior in the perturbation experiment is consistent
with the different bounds of the two distance functions. The chordal distance is bounded
above by $\sqrt{r}$ (Section~\ref{sec:methods-grassmann}), whereas the Frobenius distance
between the unprojected feature matrices has no corresponding upper bound. In all four
datasets, GrassTop's normalized distance to the original sequence eventually grows more
slowly than that obtained by direct comparison of the unprojected feature matrices. The
perturbation rate at which this crossover occurs and the size of the resulting gap differ
sharply by dataset; identifying what drives this dataset dependence is a further
direction worth pursuing.

The scope of the evidence is defined by the reported configuration and protocols. The experiments use $r=10$ and task-specific descriptor-type sets; other choices are allowed by the
framework but are not evaluated here. The classification baselines are published values
rather than results from a shared implementation, and several phylogenetic values were
estimated from a published figure. The normalization scales are computed from each full
classification corpus without using labels, so held-out observations contribute to this
preprocessing step. Finally, preservation proportions are based on 20--40 perturbed
records per rate and dataset. These conditions limit the comparison to the datasets and
settings reported in this study.

\section{Conclusion}
\label{sec:conclusion}
We presented GrassTop as a 
multiscale topological and spectral method for viral classification and phylogenetic analysis. GrassTop captures $k$-mer positional patterns of a sequence  by a subspace on a Grassmannian. This formulation permits genome comparison with
distances on the Grassmannian and is not restricted to the descriptor types, $k$-mer orders, rank, or distance used in the reported experiments. In these experiments, GrassTop exceeds five published reference methods to cross the viral-family 
classification results displayed and recovers the groups supplied with average purity 1.0 in four UPGMA datasets. Under sequence perturbation, it matches or exceeds comparison based on the Frobenius distance between the unprojected feature matrices in the large majority of the tested dataset-rate-measure combinations (29 of 32), with the size of the advantage varying by dataset. Taken together, the results support GrassTop for 
viral classification and phylogenetic analysis tasks examined here while keeping its conclusions within the tested configurations and datasets. GrassTop can be applied to a variety of tasks involving genome representation and analysis, including antigenic analysis of seasonal virus vaccines and protein-nucleic acid binding prediction.



\section*{Data and Code Availability}

The code used in this study is available at
\url{https://github.com/XiangXiangJY/GrassTop}. The public benchmark datasets can
be obtained from the CAKL repository at
\url{https://github.com/FaisalSuwayyid/CAKL}. SARS-CoV-2 sequences are subject
to the GISAID Terms of Use and are not redistributed.

\section*{Acknowledgments}
This work was supported in part by NIH grant R01AI164266, The University of Georgia (UGA), and Georgia Research Alliance.

\appendix
\renewcommand{\thetable}{A\arabic{table}}
\renewcommand{\thefigure}{A\arabic{figure}}
\renewcommand{\theHtable}{A\arabic{table}}
\renewcommand{\theHfigure}{A\arabic{figure}}
\setcounter{table}{0}
\setcounter{figure}{0}

\section{Numerical results for the sequence perturbation experiment}
\label{app:stability-details}

This appendix gives numerical values that complement the perturbation curves in the main
text. Tables~\ref{tab:label-stability} and~\ref{tab:self-stability} contain label- and
self-preservation at four selected perturbation rates from Figure~\ref{fig:stability-curves}.
Table~\ref{tab:self-distance-values} contains the normalized mean distances plotted in
Figure~\ref{fig:self-distance} for all 11 rates. The evaluation definitions and sampling
design are given in Section~\ref{sec:eval-stability}.

\begin{table}[ht]
\centering
\small
\caption{Label-preservation at four selected perturbation rates. Values are the fractions of
perturbed sequences for which the 5-NN majority vote recovers the original label.
G denotes GrassTop, and F denotes comparison based on the Frobenius distance between the
unprojected feature matrices.}
\label{tab:label-stability}
\setlength{\tabcolsep}{4pt}
\begin{tabular}{lrrrrrrrr}
\toprule
& \multicolumn{2}{c}{rate$=0.05$} & \multicolumn{2}{c}{rate$=0.15$}
& \multicolumn{2}{c}{rate$=0.25$} & \multicolumn{2}{c}{rate$=0.4$} \\
Dataset & G & F & G & F & G & F & G & F \\
\midrule
SARS-CoV-2       & 1.00 & 0.85 & 0.60 & 0.30 & 0.25 & 0.17 & 0.15 & 0.17 \\
Mito.\ (mammal.) & 0.62 & 0.50 & 0.60 & 0.50 & 0.55 & 0.40 & 0.38 & 0.25 \\
Ebolavirus       & 0.80 & 0.80 & 0.80 & 0.80 & 0.80 & 0.80 & 0.64 & 0.40 \\
Rhinovirus       & 1.00 & 1.00 & 0.95 & 0.95 & 0.80 & 1.00 & 0.75 & 0.70 \\
\bottomrule
\end{tabular}
\end{table}

\begin{table}[ht]
\centering
\small
\caption{Self-preservation at four selected perturbation rates. Values are the fractions of
perturbed sequences whose original sequence remains among their five nearest neighbors. G denotes
GrassTop, and F denotes comparison based on the Frobenius distance between the
unprojected feature matrices.}
\label{tab:self-stability}
\setlength{\tabcolsep}{4pt}
\begin{tabular}{lrrrrrrrr}
\toprule
& \multicolumn{2}{c}{rate$=0.05$} & \multicolumn{2}{c}{rate$=0.15$}
& \multicolumn{2}{c}{rate$=0.25$} & \multicolumn{2}{c}{rate$=0.4$} \\
Dataset & G & F & G & F & G & F & G & F \\
\midrule
SARS-CoV-2       & 1.00 & 0.93 & 0.60 & 0.30 & 0.12 & 0.10 & 0.03 & 0.12 \\
Mito.\ (mammal.) & 1.00 & 1.00 & 1.00 & 1.00 & 0.97 & 0.68 & 0.50 & 0.10 \\
Ebolavirus       & 1.00 & 0.92 & 0.88 & 0.84 & 0.92 & 0.72 & 0.56 & 0.28 \\
Rhinovirus       & 1.00 & 1.00 & 1.00 & 1.00 & 1.00 & 1.00 & 0.75 & 0.50 \\
\bottomrule
\end{tabular}
\end{table}
\FloatBarrier

\begin{table}[ht]
\centering
\small
\caption{Normalized mean distance from each perturbed sequence to its original sequence over all 11
perturbation rates in Figure~\ref{fig:self-distance}. Each value is divided by the mean
pairwise distance among the original sequences and rounded to three decimal
places. G denotes GrassTop, and F denotes comparison based on the Frobenius distance
between the unprojected feature matrices. The numbers of perturbed sequences per
rate are 40, 40, 25, and 20 for SARS-CoV-2, mammalian mitochondrial genomes, Ebolavirus,
and HRV, respectively.}
\label{tab:self-distance-values}
\setlength{\tabcolsep}{3.5pt}
\begin{tabular}{lrrrrrrrr}
\toprule
& \multicolumn{2}{c}{SARS-CoV-2} & \multicolumn{2}{c}{Mito.} &
\multicolumn{2}{c}{Ebolavirus} & \multicolumn{2}{c}{Rhinovirus} \\
Rate & G & F & G & F & G & F & G & F \\
\midrule
0.0001 & 0.206 & 0.198 & 0.052 & 0.023 & 0.065 & 0.040 & 0.023 & 0.014 \\
0.0005 & 0.477 & 0.466 & 0.119 & 0.053 & 0.149 & 0.091 & 0.100 & 0.061 \\
0.0010 & 0.658 & 0.645 & 0.161 & 0.074 & 0.208 & 0.129 & 0.163 & 0.096 \\
0.0050 & 1.445 & 1.458 & 0.353 & 0.163 & 0.448 & 0.280 & 0.352 & 0.211 \\
0.0100 & 1.958 & 2.071 & 0.477 & 0.230 & 0.610 & 0.393 & 0.474 & 0.294 \\
0.0200 & 2.580 & 2.915 & 0.621 & 0.321 & 0.801 & 0.545 & 0.632 & 0.412 \\
0.0500 & 3.356 & 4.552 & 0.814 & 0.492 & 1.061 & 0.832 & 0.832 & 0.612 \\
0.1000 & 3.791 & 6.485 & 0.925 & 0.683 & 1.203 & 1.119 & 0.940 & 0.795 \\
0.1500 & 3.963 & 8.026 & 0.970 & 0.827 & 1.257 & 1.332 & 0.985 & 0.918 \\
0.2500 & 4.127 & 10.479 & 1.015 & 1.051 & 1.302 & 1.623 & 1.026 & 1.102 \\
0.4000 & 4.262 & 13.315 & 1.057 & 1.325 & 1.341 & 1.973 & 1.051 & 1.261 \\
\bottomrule
\end{tabular}
\end{table}
\FloatBarrier

\bibliographystyle{plain}
\bibliography{refs}

\end{document}